\newtheorem{lema}{Lemma} 
\newcommand{\ourtitle}{Identifying Maximal Non-Redundant Integer Cone Generators}
\title{\ourtitle}
\author{Slobodan Mitrovi\'{c} \and Ruzica Piskac \and Viktor Kun\v{c}ak}
\institute{EPFL School of Computer and Communication Sciences,
  Switzerland\footnote{This report is the draft that was submitted to a conference in February 2011 when all authors were at EPFL,
    based on an introductory EPFL graduate program semester
    project by Slobodan Mitrovi\'c.}}
\authorrunning{Slobodan Mitrovi\'{c}, Ruzica Piska\v{c}, Viktor Kuncak}
\titlerunning{\ourtitle}
\newcommand{\smartparagraph}[1]{\smallskip\noindent 
{\bf #1}\ }
\newcommand{\m}[1]{\mbox{\rm \textsf{#1}}}
\newcommand{\justBAPA}{\m{\small BAPA}}
\newcommand{\BAPA}{\m{\small QFBAPA}}
\newcommand{\FPA}{F_{\textsf{PA}}}
\begin{document}
\maketitle
\begin{abstract}
  A non-redundant integer cone generator (NICG) of dimension
  $d$ is a set $S$ of vectors from $\{0,1\}^d$ whose vector
  sum cannot be generated as a positive integer linear
  combination of a proper subset of $S$. The largest
  possible cardinality of NICG of a dimension $d$, denoted
  by $N(d)$, provides an upper bound on the sparsity of
  systems of integer equations with a large number of
  integer variables. A better estimate of $N(d)$ means that
  we can consider smaller sub-systems of integer equations
  when solving systems with many integer
  variables. Furthermore, given that we can reduce
  constraints on set algebra expressions to constraints on
  cardinalities of Venn regions, tighter upper bound on $N(d)$
  yields a more efficient decision procedure for a logic of
  sets with cardinality constraints (BAPA), which has been
  applied in software verification.  Previous attempts to
  compute $N(d)$ using SAT solvers have not succeeded even
  for $d=3$. The only known values were computed manually:
  $N(d)=d$ for $d < 4$ and $N(4) > 4$. We provide the first
  exact values for $d > 3$, namely, $N(4)=5$, $N(5)=7$, and
  $N(6)=9$, which is a significant improvement of the known
  asymptotic bound (which would give only e.g. $N(6) \le
  29$, making a decision procedure impractical for
  $d=6$). We also give lower bounds for $N(7)$, $N(8)$,
  $N(9)$, and $N(10)$, which are: $11$, $13$, $14$, and
  $16$, respectively. We describe increasingly sophisticated
  specialized search algorithms that we used to explore the
  space of non-redundant generators and obtain these
  results.
\end{abstract}

\section{Introduction}\label{sec:introduction}

The theory of sets and set operations plays an important
role in software verification and data flow analysis
\cite{Aiken99SetConstraintsIntro}.  Additionally, reasoning
about sets is used for proving correctness of data
structures, since a natural choice of an abstraction
function is the abstraction function that maps the content
of a data structure to a set. For full functional
verification of complex data structures often it is
important to maintain the number of elements stored in the
data structure
\cite{ZeeETAL08FullFunctionalVerificationofLinkedDataStructures}. The
logic in which one can express set relations, cardinality
constraints and linear integer arithmetic is known under the
name Boolean Algebra with Presburger Arithmetic
({\justBAPA})
\cite{KuncakETAL06DecidingBooleanAlgebraPresburgerArithmetic}. The
decidability of this logic was long known
\cite{FefermanVaught59FirstOrderPropertiesProductsAlgebraicSystems},
but it was not until recently that
\cite{KuncakETAL06DecidingBooleanAlgebraPresburgerArithmetic}
proved that {\justBAPA} admits quantifier-elimination
and has asymptotically the same complexity as Presburger Arithmetic. 
The
quantifier-elimination algorithm introduced in \cite{KuncakETAL06DecidingBooleanAlgebraPresburgerArithmetic} reduces a
given {\justBAPA} formula to a Presburger arithmetic formula
using Venn regions. 
Many verification conditions expressing
properties of complex data structures can be
immediately formulated in quantifier-free fragment of {\justBAPA}
\cite{KuncakRinard07TowardsEfficientSatisfiabilityCheckingBoolean}, denoted {\BAPA}. 
For these theoretical and practical reasons, we consider only the {\BAPA}
fragment in this paper.

Checking the satisfiability of
a {\BAPA} formula is an NP-complete problem, where the non-trivial
aspect is showing the membership in NP
\cite{KuncakRinard07TowardsEfficientSatisfiabilityCheckingBoolean}.
The recent advances in SAT solvers made SAT instances coming from hardware
and software verification
more amenable to solution attempts than before. 
However, despite the existence of a polynomial encoding of {\BAPA} into SAT,
an efficient {\BAPA} solver is still
missing.
The most recent {\BAPA} implementation
\cite{SuterETALL2011BAPASMT} uses the state-of-art efficient
SMT solver Z3.
This implementation relies on the DPLL(T)
mechanism of Z3 to reason about the top-level propositional
atoms of a {\BAPA} formula. Although this implementation is
based on an an algorithm that explores all Venn
regions, it automatically decomposes problems into subcomponents when possible, and
applies Venn region construction only within individual components.
This approach is an important practical step forward, but there are still
natural formulas that cannot be decomposed. For such cases, the running time of the procedure increases
doubly-exponentially in the number of variables.\footnote{This paper was written in 2011; subsequently an \emph{implementation} 
  of another procedure is available in CVC4 and was documented in \cite{DBLP:journals/lmcs/BansalBRT18}, but this does not yield
  improved complexity bounds for the general case nor does it impact the problem studied in this paper.}

An alternative approach towards the efficient implementation is
to explore the sparse model property of {\BAPA}. In
\cite{KuncakRinard07TowardsEfficientSatisfiabilityCheckingBoolean}
was shown that, if a given {\BAPA} formula is satisfiable,
then there exists an equisatisfiable linear arithmetic
formula that is polynomial in the size of the original
formula. The decision procedure based on this theorem is
described in detail in
\cite{KuncakRinard07TowardsEfficientSatisfiabilityCheckingBoolean}. The procedure
takes as an input a {\BAPA} formula and converts it into an
equisatisfiable Presburger arithmetic formula
$\FPA$. Based on the newly derived $\FPA$ and the
theorem on a sparse solution for integer linear programming
\cite{EisenbrandShmonina06CaratheodoryBoundsIntegerCones},
the algorithm computes a positive integer $N'(d)$, which
denotes an upper bound of the number of non-empty Venn regions for formula that contains $d$ constraints.
The algorithm then runs in a loop and
tries to incrementally construct a model of sparsity $0, 1,
\ldots, N'(d)$. If no model is found after
the loop execution is finished, then the input formula is
unsatisfiable.
The number $N'(d)$ is an upper bound and it can be easily computed
from a dimension of a problem. However, this bound is not
tight. Our goal is to establish the bound on $N'(d)$ as tight as
possible in order to make an efficient
implementations of a {\BAPA} solver more feasible.

We are thus interested in deriving the smallest possible number
$N'(d)$, denoted by $N(d)$, which still preserves the desired property: if formula
has a solution, then it also has a solution of sparsity
$N(d)$. This paper will focus on computing the values of $N(d)$
using various combinatorial algorithms and their
optimizations.

The existence of $N(d)$ is guaranteed by the main theorem on
a sparse solution for integer linear programming
\cite{EisenbrandShmonina06CaratheodoryBoundsIntegerCones},
which states that if a vector is an element of an integer
cone, then it is also an element of some smaller integer
cone.  The key observation in
\cite{KuncakRinard07TowardsEfficientSatisfiabilityCheckingBoolean}
was not to use any ``small'' integer cone, but the smallest
one. For this purpose in
\cite{KuncakRinard07TowardsEfficientSatisfiabilityCheckingBoolean}
was introduced so called a {\emph{non-redundant integer
    cone}},
representing an integer cone that does not contain a
smaller cone that could generate a given vector.
This gives us a very simple definition of a function for which we known
linear lower bounds and $O(d\log(d))$ upper bounds.


The key contribution of this paper is a computation of the exact tight values of $N(d)$
for certain $d$. We also slightly improve previously known bounds for $N'(d)$. A computation of $N'(d)$ is an algorithmically
challenging task. Earlier computations \cite{KuncakRinard07TowardsEfficientSatisfiabilityCheckingBoolean}
found the exact values only for $d = 1, 2, 3$.

The following table outlines in comparison the information we knew about $N(d)$ earlier and the new values derived in this paper:
\[
\begin{array}{|c|c|c|c|c|} \hline 
   & \multicolumn{2}{|c|}{\mbox{previous known}} & \multicolumn{2}{|c|}{\mbox{new results}}      \\ \hline
d  & \mbox{lower bound} & \mbox{upper bound} & \mbox{lower bound} & \mbox{upper bound} \\ \hline 
1  &        1           &          1         &         1          &          1         \\ \hline 
2  &        2           &          2         &         2          &          2         \\ \hline 
3  &        3           &          3         &         3          &          3         \\ \hline
4  &        5           &          16        &         5          &     \mathbf{5}     \\ \hline
5  &        6           &          22        &        \mathbf{7}  &     \mathbf{7}     \\ \hline
6  &        7           &          29        &        \mathbf{9}  &     \mathbf{9}     \\ \hline 
7  &        8           &          36        &        \mathbf{11} &     \mathbf{19}    \\ \hline 
8  &        10          &          43        &        \mathbf{13} &          43        \\ \hline 
9  &        11          &          51        &        \mathbf{14} &          51        \\ \hline 
10 &        12          &          59        &        \mathbf{16} &     \mathbf{58}    \\ \hline 
\end{array}
\]

We obtained those results by applying the following techniques:
\begin{enumerate}
\item we reduced the search space by establishing the
  isomorphisms between solutions
\item we used the already known solution for $N(d)$ to obtain better estimate on $N'(d+1)$
\item we applied Gaussian elimination method to check whether the returned set of vectors is indeed a non-redundant integer
    cone
\item we used an implementation of a randomized algorithm to obtain a better estimation for some $N'(d)$ values, and
		for checking  efficiency of our implementation
\end{enumerate}


\section{Preliminaries}

This section summarizes the previously known results that
are necessary for a better understanding of the rest of the
paper. We recall the definitions and the theorems introduced
in
\cite{KuncakRinard07TowardsEfficientSatisfiabilityCheckingBoolean}.

Quantifier-free Boolean Algebra with Presburger Arithmetic
({\BAPA}) is a theory that includes reasoning about set
relations and operations, and reasoning about integer linear
arithmetic. Sets and integers are connected through the
cardinality operator. A simple decision procedure for
{\BAPA} uses Venn regions and reduces checking satisfiability
of a {\BAPA} formula to checking satisfiability of a
corresponding linear integer arithmetic formula. As an
illustration consider the following {\BAPA} formula:

\begin{equation*}
|U| = 100 \wedge \bigwedge_{1 \leq i < j \leq 3} {|x_i \cup
  x_j| = 30} \wedge \bigwedge_{1 \leq i \leq 3}{|x_i| = 20}
\wedge \bigwedge_{1 \leq i \leq 3}{|x_i| \subseteq U}.
\end{equation*}

With $l_i$ we denote fresh integer variables. The above
formula is equisatisfiable with the following formula written in a matrix form:
\begin{equation}\label{fla:derived}
\begin{bmatrix}
1 & 1 & 1 & 1 & 1 & 1 & 1 & 1\\
0 & 0 & 1 & 1 & 1 & 1 & 1 & 1\\
0 & 1 & 0 & 1 & 1 & 1 & 1 & 1\\
0 & 1 & 1 & 1 & 0 & 1 & 1 & 1\\
0 & 0 & 0 & 0 & 1 & 1 & 1 & 1\\
0 & 0 & 1 & 1 & 0 & 0 & 1 & 1\\
0 & 1 & 0 & 1 & 0 & 1 & 0 & 1
\end{bmatrix}
\begin{pmatrix}
l_{000}\\
l_{001}\\
l_{010}\\
l_{011}\\
l_{100}\\
l_{101}\\
l_{110}\\
l_{111}
\end{pmatrix}
 =
\begin{pmatrix}
100\\
30\\
30\\
30\\
20\\
20\\
20
\end{pmatrix}
\end{equation}

The details of the translation algorithm can be found in
\cite{KuncakRinard07TowardsEfficientSatisfiabilityCheckingBoolean}.
However, this newly derived formula might have an
exponential size in the size of the original formula. In
this example too, the number of variables is exponential
in the number of sets in the original formula.

\begin{definition}
Let $X \subseteq \mathbb{Z}^d$ be a set of integer vectors. An
integer cone generated by $X$, denoted with
$\rm{int\_cone}(X)$, is a linear additive closure of vectors
of $X$:
\[
\rm{int\_cone}(X) = \{\lambda_1 x_1 + \ldots + \lambda_n x_n
| n \geq 0, \ x_i \in X , \ \lambda_i \ge 0\ , \ \lambda_i \in \mathbb{Z}\}.
\]
\end{definition}

Note that checking satisfiability of \eqref{fla:derived}
reduces to checking whether a vector belongs to an integer
cone. The number of vectors in the integer cone can be
infinite and we are interested in deriving the ``small''
subset of them that would still generate the same initially
given vector. We apply the results obtained in the
operational research community on sparse solutions of
integer linear programming problems.

\begin{theorem}[Theorem 1 in 
\cite{EisenbrandShmonina06CaratheodoryBoundsIntegerCones}]\label{tm:fritz}
  Let $X \subseteq \mathbb{Z}^d$ be a finite set of integer
  vectors and $M_X = \max\{n |\, n = |x_{ij}|, \ x_{ij}
  \text{ is an ordinate of vector } x_i, \ x_i \in
  X\}$. Assume that $b \in \rm{int\_cone}(X)$. Then there
  exists a subset $\tilde{X} \subseteq X$ such that $b \in
  \rm{int\_cone}(\tilde{X})$ and $|\tilde{X}| \leq 2 d
  \log_2{(4 d M_X)}$.
\end{theorem}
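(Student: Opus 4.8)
The plan is to fix a most economical representation of $b$ and bound the number of distinct generators it uses. Concretely, since $b \in \mathrm{int\_cone}(X)$, I would first choose a representation $b = \sum_{i=1}^n \lambda_i x_i$ with distinct $x_i \in X$ and $\lambda_i \ge 1$ that minimizes the total coefficient sum $\sum_i \lambda_i$, and, subject to that, minimizes the number $n$ of generators used. Writing $\tilde X = \{x_1,\dots,x_n\}$, the goal becomes the purely combinatorial estimate $n \le 2d\log_2(4dM_X)$. To expose the logarithm, I would group the generators by the binary magnitude of their coefficients: for $j \ge 0$ let $B_j = \{\, i : 2^j \le \lambda_i < 2^{j+1}\,\}$. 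The number of nonempty classes is at most $1 + \log_2 \lambda_{\max}$, where $\lambda_{\max} = \max_i \lambda_i$, so it suffices to prove two quantitative facts: that each class contains at most $2d$ generators, and that $\lambda_{\max} \le 2 d M_X$.

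Both facts I would establish by an exchange argument that exploits linear dependence. If some class $B_j$ contained more than $2d$ generators (or, for the second fact, if more than $d$ generators were present at the top scale), then the corresponding vectors $\{x_i\}$ would be linearly dependent over $\mathbb{R}$, so there is a nonzero integral circuit $\gamma$ with $\sum_i \gamma_i x_i = 0$ supported on these indices. Choosing the sign of $\gamma$ so that $\sum_i \gamma_i > 0$ (or so that the support shrinks when $\sum_i \gamma_i = 0$) and replacing $\lambda_i$ by $\lambda_i - \gamma_i$ keeps the representation valid, strictly decreases $\sum_i \lambda_i$ (or keeps it equal while driving some coefficient to $0$, hence reducing $n$), contradicting minimality --- provided every shifted coefficient stays a nonnegative integer. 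Combining the per-class bound $|B_j| \le 2d$ with the bound $\lambda_{\max} \le 2dM_X$, which caps the number of classes at $\log_2(4dM_X)$, then yields $n = \sum_j |B_j| \le 2d \log_2(4dM_X)$.

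The main obstacle is exactly the feasibility proviso: the integral circuit $\gamma$ is only guaranteed to have entries bounded by the $d \times d$ subdeterminants of the generator matrix, i.e. by roughly $(dM_X)^{d}$ via Cramer's rule and Hadamard's inequality, and these can be far larger than the coefficients $\lambda_i$ we are trying to decrease. The delicate point, and the true source of both the factor $M_X$ and the logarithm, is to perform each exchange at a scale where there is enough room: when $\gamma$ is supported inside a single class $B_j$ every participating $\lambda_i$ is at least $2^j$, so one can rescale or select the circuit to keep all shifted coefficients nonnegative, and a symmetric argument bounds $\lambda_{\max}$. I would make this rigorous either by always choosing a circuit supported within one magnitude class and sized to that class, or, alternatively, by recasting the whole argument as a halving recursion $b \mapsto \tfrac12(b - r)$, peeling off a $\{0,1\}$-combination $r$ at each step; the norm $\|b\|_\infty$ then contracts geometrically and bottoms out at $O(dM_X)$, the $\log_2(4dM_X)$ levels each contributing at most $2d$ fresh generators by the classical Carathéodory bound for real cones.
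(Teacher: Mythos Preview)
The paper does not prove this theorem at all: it is quoted verbatim as ``Theorem~1 in \cite{EisenbrandShmonina06CaratheodoryBoundsIntegerCones}'' and used as a black box, so there is no in-paper proof to compare your attempt against.

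As for the attempt itself, your overall skeleton---choose a minimal representation, stratify the coefficients dyadically, bound the number of levels by $\log_2(4dM_X)$ and the size of each level by $2d$---is exactly the shape of the Eisenbrand--Shmonin argument, and your second variant (the halving recursion $b\mapsto\tfrac12(b-r)$) is essentially their proof. However, the one step that carries all the weight is left unproved in your write-up. You correctly identify the obstacle: an arbitrary integral circuit $\gamma$ supported on a dyadic class $B_j$ can have entries of size $(dM_X)^d$, vastly exceeding the $\lambda_i\in[2^j,2^{j+1})$ you would need to subtract from, so ``rescale or select the circuit'' is not an argument but a hope. Likewise, in the halving picture, the claim that each level contributes at most $2d$ \emph{integer} generators does not follow from the real Carath\'eodory theorem (which gives $d$ generators with \emph{real} coefficients); one needs the specific Eisenbrand--Shmonin lemma that for any $b\in\mathrm{int\_cone}(X)$ there is a $\{0,1\}$-combination $r$ of at most $2d$ vectors from $X$ with $b-r\in 2\cdot\mathrm{int\_cone}(X)$. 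That lemma is the actual content of the theorem, and your proposal gestures toward it without supplying it.
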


This theorem establishes the bound on the number of vectors
of the cone needed to generate a given vector. Because $M_X
= 1$, the number of vectors in the ``smaller'' cone is
bounded by $2d\log_2 d + 4d$. In
\cite{KuncakRinard07TowardsEfficientSatisfiabilityCheckingBoolean}
it was observed that this bound can be reduced to
$2d\log_2 d$ by taking into account that the vectors are non-negative.

\begin{definition}
Let $X \subseteq \mathbb{Z}^d$ and let $b$ be an integer
vector. Set $X$ is called a non-redundant integer cone
generator for $b$, denoted by $\rm{NICG}(X, b)$, if:
\begin{itemize}
\item $b \in \rm{int\_cone}(X)$
\item for every $x \in X$ holds: $b \notin \rm{int\_cone}(X \backslash \{x\})$.
\end{itemize}
\end{definition}

Nevertheless, we want to avoid computing a non-redundant
integer cone generator for every given vector. The following
theorem proved in
\cite{KuncakRinard07TowardsEfficientSatisfiabilityCheckingBoolean},
shows that it is enough to consider only one particular
vector, namely $\Sigma X = \sum_{x \in X}{x}$. We define $\rm{NICG}(X)$ as
$\rm{NICG}(X, \Sigma X)$.

\begin{lema}\label{lema:NICG}
Let $X \subseteq \mathbb{Z}^d_{\geq 0}$ be a set of
non-negative integer vectors. The following two statements
are equivalent:
\begin{itemize}
\item there exists a non-negative integer vector $b$ such
  that $\rm{NICG}(X, b)$ holds
\item $\rm{NICG}(X)$ holds
\end{itemize}
\end{lema}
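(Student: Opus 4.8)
The plan is to prove the two implications separately, with the reverse implication being immediate and the forward implication carrying all of the content. For the reverse direction, assume $\rm{NICG}(X)$ holds. Since $\Sigma X = \sum_{x \in X} x$ is a sum of non-negative integer vectors, it is itself a non-negative integer vector, so taking $b := \Sigma X$ witnesses the first statement, because by definition $\rm{NICG}(X)$ \emph{is} $\rm{NICG}(X, \Sigma X)$. Nothing more is needed here.

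For the forward direction I would argue by contraposition: assuming $\rm{NICG}(X)$ fails, I show that $\rm{NICG}(X, b)$ fails for \emph{every} non-negative integer vector $b$. First observe we may assume $0 \notin X$, since $\rm{int\_cone}(X) = \rm{int\_cone}(X \setminus \{0\})$ makes the zero vector always redundant, so $\rm{NICG}(X,b)$ can never hold when $0 \in X$. Now, $\Sigma X \in \rm{int\_cone}(X)$ always holds (all coefficients equal to $1$), so the failure of $\rm{NICG}(X)$ must come from the second clause: there is some $x_0 \in X$ with $\Sigma X \in \rm{int\_cone}(X \setminus \{x_0\})$, i.e. $\Sigma X = \sum_{y \neq x_0} \mu_y y$ with $\mu_y \in \mathbb{Z}_{\geq 0}$. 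Subtracting the identity $\Sigma X = \sum_{y \in X} y$ and moving the zero-coefficient terms to the left yields the key exchange identity
\[
x_0 + \sum_{y \in N} y \;=\; \sum_{y \in P} (\mu_y - 1)\, y,
\]
where $N = \{\, y \neq x_0 \mid \mu_y = 0 \,\}$ and $P = \{\, y \neq x_0 \mid \mu_y \geq 2 \,\}$; note both sides have non-negative integer coefficients and the index sets $\{x_0\}, N, P$ are pairwise disjoint.

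Next I would use this identity to defeat an arbitrary $b$. Fix a non-negative integer $b$. If $b \notin \rm{int\_cone}(X)$ the first NICG clause already fails. Otherwise write $b = \sum_{y \in X} \lambda_y y$ with $\lambda_y \geq 0$; if some $\lambda_z = 0$ then $b \in \rm{int\_cone}(X \setminus \{z\})$ and NICG fails. So assume every $\lambda_y \geq 1$. Set $m = \min_{y \in \{x_0\} \cup N} \lambda_y \geq 1$ and subtract $m$ copies of the exchange identity from this representation of $b$. The result is a representation of $b$ with coefficients $\lambda_{x_0} - m$ on $x_0$, $\lambda_y - m$ on each $y \in N$, $\lambda_y + m(\mu_y - 1)$ on each $y \in P$, and $\lambda_y$ on the remaining vectors $R = X \setminus (\{x_0\} \cup N \cup P)$. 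By the choice of $m$ all of these are non-negative, and the vector achieving the minimum gets coefficient exactly $0$; hence $b \in \rm{int\_cone}(X \setminus \{z\})$ for that $z$, so $\rm{NICG}(X, b)$ fails. As $b$ was arbitrary, this completes the contrapositive and hence the equivalence.

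I expect the main obstacle to be precisely the point that the single most tempting move fails: one cannot simply read the dependency $x_0 = \sum_{y \neq x_0}(\mu_y - 1) y$ as a linear-algebra relation and substitute it into $b$'s representation to eliminate $x_0$, because those coefficients $\mu_y - 1$ can be \emph{negative}, whereas $\rm{int\_cone}$ membership demands non-negative integer coefficients. The whole argument hinges on reorganizing the dependency into the non-negative exchange identity above and then subtracting exactly $m = \min_{\{x_0\} \cup N} \lambda_y$ copies of it, which simultaneously keeps every coefficient $\geq 0$ and forces one of them to reach $0$. Verifying that non-negativity (using $\mu_y \geq 2$ on $P$, $\lambda_y \geq m$ on $\{x_0\} \cup N$, and $\lambda_y \geq 0$ on $R$) is the only real calculation, and it is routine once the identity is in place.
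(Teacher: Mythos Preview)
Your argument is correct. The reverse implication is indeed trivial, and for the forward direction your contrapositive works: the exchange identity
\[
x_0 + \sum_{y \in N} y \;=\; \sum_{y \in P} (\mu_y - 1)\, y
\]
is the right object, and subtracting $m = \min_{\{x_0\}\cup N}\lambda_y$ copies of it from any all-positive representation of $b$ is exactly what is needed to drive one coefficient to zero while keeping the rest non-negative. The case analysis ($0\in X$, $b\notin\rm{int\_cone}(X)$, some $\lambda_z=0$) cleanly disposes of the degenerate situations before you get there.

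As for comparison with the paper: there is nothing to compare against. The paper does not prove this lemma; it merely states it and attributes the proof to \cite{KuncakRinard07TowardsEfficientSatisfiabilityCheckingBoolean}. Your write-up therefore supplies a complete, self-contained proof where the paper gives only a citation. The argument you give (rearranging the redundancy relation into a non-negative exchange identity and then using it to zero out a coefficient in an arbitrary representation of $b$) is the standard way this is done and is essentially the argument in the cited source, so nothing is lost by inserting your proof in place of the reference.
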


Our original motivation was to check satisfiability of
{\BAPA} formulas. The decision procedure can be outlined as
follows: we reduce satisfiability of the initial {\BAPA}
formula to check the membership in an integer cone, where
the generating vectors are bit vectors. Applying
Theorem~\ref{tm:fritz} results in the small model
property. Therefore, our new goal becomes to compute the
number $N(d)$ for a given dimension $d$. The number $N(d)$
sets an upper bound on the cone size: if a vector is a
member of an integer cone, then it is a member of a cone
generated with at most $N(d)$ vectors. To translate it back
to the {\BAPA} satisfiability problem: if a {\BAPA} formula
is satisfiable, then it also has a model where at most
polynomially many Venn regions are non-empty. The number of
non-empty Venn regions is determined using $N(d)$. The
decision procedures runs in the loop from 0 to $N(d)$ and
tries to incrementally construct a model of a size $0, 1,
\ldots, N(d)$.

Lemma~\ref{lema:NICG} justifies the following definition:

\begin{definition}
Let $d$ be a non-negative integer. With $N(d)$ we denote the
cardinality of a set $X$ such that $\rm{NICG}(X)$ holds and
for any set $Y$ of a greater size does not hold
$\rm{NICG}(Y)$
\[
N(d) = \max\{|X| \ | \ X \subseteq \{0, 1\}^d, \ \rm{NICG}(X)  \}
\]
\end{definition}

Lastly we provide is the summary on known lower and
upper bounds on the value of $N(d)$, as well as the computed
values for $N(d)$ for some $d$:

\begin{theorem}\label{tm:knownNd}
For a positive integer $d \ge 1$ and $N(d)$ the following holds:
\begin{enumerate}
\item $d \le N(d)$
\item $N(d) \le (1 + \varepsilon(d))(d\log_2 d)$, where
  $\varepsilon(d) \le 1$ and $\displaystyle\lim_{d
  \rightarrow \infty} \varepsilon(d) = 0$
\item $N(d) + 1 \le N(d + 1)$
\item $N(d) = d$, for $ d = 1, 2, 3$
\item $N(d) > d$ for $d \ge 4$
\end{enumerate}
\end{theorem}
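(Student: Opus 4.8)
The five parts split cleanly, so I would prove them separately and front-load the two constructive lemmas, parts~1 and~3, since part~5 rests on them. For part~1 ($d\le N(d)$) I would exhibit the standard basis $X=\{e_1,\dots,e_d\}\subseteq\{0,1\}^d$: here $\Sigma X=\mathbf 1$, and deleting any $e_i$ leaves generators that are all $0$ in coordinate $i$, so no nonnegative combination of them reaches the value $1$ there; hence $\mathbf1\notin\mathrm{int\_cone}(X\setminus\{e_i\})$, $\mathrm{NICG}(X)$ holds, and $N(d)\ge d$. For part~3 (monotonicity $N(d)+1\le N(d+1)$) I would take an optimal witness $X\subseteq\{0,1\}^d$ with $|X|=N(d)$ and \emph{lift} it by a fresh zero coordinate: set $X'=\{(x,0):x\in X\}\cup\{e_{d+1}\}\subseteq\{0,1\}^{d+1}$, so $|X'|=N(d)+1$ and $\Sigma X'=(\Sigma X,1)$. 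Reading off the last coordinate shows any representation of $\Sigma X'$ must use $e_{d+1}$ with coefficient exactly $1$ (it alone is nonzero there), so $e_{d+1}$ cannot be dropped; after fixing that coefficient the first $d$ coordinates reduce to a representation of $\Sigma X$ by the lifted vectors, so dropping a lifted $(x,0)$ would give $\Sigma X\in\mathrm{int\_cone}(X\setminus\{x\})$, contradicting $\mathrm{NICG}(X)$.

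For part~5 ($N(d)>d$ for $d\ge4$), part~3 reduces everything to the single base case $N(4)\ge5$, because then $N(d)\ge N(4)+(d-4)\ge d+1$ for all $d\ge4$. For the base case I would present an explicit witness and verify it directly. One set that works is $X=\{(1,1,1,0),(1,1,0,1),(1,0,1,1),(0,1,1,1),(1,1,0,0)\}$, with $\Sigma X=(4,4,3,3)$; checking $\mathrm{NICG}(X)$ means solving, for each of the five generators, the $4\times4$ nonnegative integer system obtained by deleting it, and confirming infeasibility in every case (three deletions force a non-integral solution, e.g. two coordinates collapse to $2\lambda=3$, and two force a negative coefficient). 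This is the only step with genuine content, and I expect it to be the \textbf{main obstacle}, but the difficulty is conceptual rather than computational: in dimension $4$ a purely ``local'' coordinate argument can certify at most $d$ essential generators, so the fifth must be forced by a \emph{global} arithmetic obstruction. This is precisely why $d=4$ is the threshold and why the naive symmetric candidates (all weight-$2$ vectors, or the all-ones vector together with a basis) fail to be non-redundant.

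Parts~2 and~4 I would then dispatch quickly. Part~2 is a direct corollary of Theorem~\ref{tm:fritz}: since a NICG is by definition its own minimal generating subset for its sum vector, applying the Carath\'eodory-type bound with $M_X=1$ (all entries are $0/1$) to $b=\Sigma X$ forces $|X|\le 2d\log_2(4d)$; rewriting this $O(d\log_2 d)$ estimate in the normalized form $(1+\varepsilon(d))\,d\log_2 d$ and absorbing the lower-order $O(d)$ term (sharpened by the nonnegativity refinement noted after Theorem~\ref{tm:fritz}) into $\varepsilon(d)$ yields $\varepsilon(d)\le1$ with $\varepsilon(d)\to0$. For part~4 the lower bound $N(d)\ge d$ is already part~1, so only the matching $N(d)\le d$ for $d\in\{1,2,3\}$ remains; this is a finite analysis, trivial for $d=1$ and a short enumeration for $d=2,3$ ruling out any non-redundant configuration of size $d+1$, which I would either reproduce or cite from the prior work. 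Thus the whole theorem reduces to the two small-scale facts that pin down the transition $N(d)=d\to N(d)>d$: the exact upper bounds of part~4 and the explicit $d=4$ witness of part~5.
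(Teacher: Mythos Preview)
Your proposal is correct, but note that the paper does not actually prove Theorem~\ref{tm:knownNd} at all: it is presented in the Preliminaries section explicitly as a \emph{summary} of results established in the cited reference~\cite{KuncakRinard07TowardsEfficientSatisfiabilityCheckingBoolean}, with no accompanying argument. So there is nothing to compare your proof against in this paper; what you have written is a self-contained reconstruction of material the authors deliberately imported by citation.

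That said, your reconstruction lines up well with what the paper does use elsewhere. Your part~2 invokes exactly the bound the paper recalls (Theorem~\ref{tm:fritz} with $M_X=1$, plus the nonnegativity sharpening mentioned immediately after it). Your lifting construction for part~3 is the natural one. For part~5, your explicit $d=4$ witness $\{(1,1,1,0),(1,1,0,1),(1,0,1,1),(0,1,1,1),(1,1,0,0)\}$ is precisely the set the paper later exhibits (column-reordered) in its table of computed solutions, so the verification you describe is consistent with what the authors actually found. The only place you defer to an outside check is the upper bound $N(d)\le d$ for $d\le3$ in part~4, which the paper likewise treats as inherited from prior work rather than re-derived.
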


In the rest of the paper we will describe the algorithms and
optimizations we used to compute $N(4), N(5)$ and $N(6)$. We
will also provide improved lower bounds on $N(7)$ and
$N(8)$.

	
\section{Core Techniques: N(4)=5, N(5)=7}\label{sec:core}

In this section we present methods that we initially used to compute values of $N(4)$ and $N(5)$. Figure~\ref{fig:isInIntCone} describes a simple algorithm that checks whether a set of vectors $X \subseteq \{0, 1\}^d$ is a non-redundant integer cone. 

\begin{figure}[h]		
\begin{codebox}
			\Procname{$\proc{NICG}(X)$}
			\zi \Comment Global variable that stores $\rm{NICG}$ property of $X$.
			\zi	$found \gets \const{false}$
			\zi \For each vector $x \in X$
			\zi	\Do
						$\proc{inIntConeTest}(X \backslash \{x\}, \sum{X})$
			\zi		\If $found \isequal \const{true}$
			\zi		\Then
							\Return $\const{false}$
						\End
					\End
			\zi	\Return $\const{true}$
			\end{codebox}
	
			\begin{codebox}
			\Procname{$\proc{inIntConeTest}(X, b)$}
			\zi	\If $b \isequal 0$
			\zi	\Then
						$found \gets \const{true}$
			\zi		\Return
					\End
					
			\zi \If $X \isequal \emptyset$
			\zi	\Then
						\Return
					\End
			
			\zi $newB \gets b$
			\zi $x \gets $ take any element from $X$
			\zi	\While \const{true}
			\zi	\Do
						$\proc{inIntConeTest}(X \backslash \{x\}, newB)$
			\zi		$newB \gets newB - x$
			\zi		\If $found \isequal \const{true}$ or $newB$ contains negative component
			\zi		\Then
							\Return
						\End
					\End
\end{codebox}
		  \caption{Program \textsc{NICG}: checks whether for a set of integer vector $X$ holds $\rm{NICG}(X)$}
		  \label{fig:isInIntCone}
		\end{figure}

A simple incremental algorithm for computing the value $N(d)$ works as follows: the 
algorithm starts with $n$ and constructs a set $X$ of the cardinality $n$, which has the property $\rm{NICG}(X)$. In the next iteration $n$ gets increased and the algorithm repeats the same steps.
As soon as the algorithm encounters the first $n$ for which it cannot construct a 
$\rm{NICG}(X)$ of the cardinality $n$, it stops and returns $N(d) = n - 1$.
The correctness of this algorithm is guaranteed by the following theorem, originally proved in \cite{KuncakRinard07TowardsEfficientSatisfiabilityCheckingBoolean}:

\begin{lema}\label{lema:monotonic}
If $\rm{NICG}(X)$ and $Y \subseteq X$, then $\rm{NICG}(Y)$.
\end{lema}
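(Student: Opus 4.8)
The plan is to prove the contrapositive: assuming $\rm{NICG}(Y)$ fails for some $Y \subseteq X$, I would exhibit a single vector in $X$ witnessing that $\rm{NICG}(X)$ fails as well. First observe that the membership clause of the NICG definition is automatic, since $\Sigma Y = \sum_{z \in Y} 1 \cdot z$ always lies in $\rm{int\_cone}(Y)$; hence the only substantive requirement in $\rm{NICG}(Y)$ is the non-redundancy clause. So if $\rm{NICG}(Y)$ fails, there must be some $y \in Y$ for which $\Sigma Y \in \rm{int\_cone}(Y \setminus \{y\})$, i.e. a representation
\[
\Sigma Y = \sum_{z \in Y \setminus \{y\}} \lambda_z\, z, \qquad \lambda_z \in \mathbb{Z}_{\ge 0}.
\]

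The key step is to lift this redundant representation of $\Sigma Y$ to one of $\Sigma X$. Writing $\Sigma X = \Sigma Y + \sum_{w \in X \setminus Y} w$ and substituting the representation above yields
\[
\Sigma X = \sum_{z \in Y \setminus \{y\}} \lambda_z\, z + \sum_{w \in X \setminus Y} 1 \cdot w.
\]
Since $Y \subseteq X$ and $y \in Y$, the index sets $Y \setminus \{y\}$ and $X \setminus Y$ are disjoint and their union is exactly $X \setminus \{y\}$; in particular $y$ appears in neither sum. All coefficients (the $\lambda_z$ and the $1$'s) are non-negative integers, so the right-hand side is a witness that $\Sigma X \in \rm{int\_cone}(X \setminus \{y\})$. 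As $y \in X$, this directly contradicts the non-redundancy clause of $\rm{NICG}(X)$, which completes the contrapositive and hence the proof.

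There is no genuine analytic obstacle here; the one point demanding care is the bookkeeping of index sets in the substitution step, namely verifying that adjoining the extra summands $\sum_{w \in X \setminus Y} w$ neither reintroduces $y$ nor overlaps the vectors already used for $\Sigma Y$, so that the combined expression is supported precisely on $X \setminus \{y\}$ with non-negative integer coefficients. The degenerate cases are immediate: $Y = \emptyset$ satisfies the non-redundancy clause vacuously, and $Y = X$ is the hypothesis itself.
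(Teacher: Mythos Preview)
Your argument is correct. The contrapositive is the natural route: a redundancy witness $y$ for $\Sigma Y$ lifts to a redundancy witness for $\Sigma X$ by adjoining the remaining vectors of $X\setminus Y$ with coefficient~$1$, and your bookkeeping check that $(Y\setminus\{y\})\cup(X\setminus Y)=X\setminus\{y\}$ with the two pieces disjoint is exactly what is needed.

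As for comparison: the present paper does not actually give a proof of this lemma. It merely restates the result and attributes it to \cite{KuncakRinard07TowardsEfficientSatisfiabilityCheckingBoolean}, so there is no in-paper argument to compare against. Your proof is the standard one and would serve perfectly well here.
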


Using this approach we computed $N(5) = 7$ after approximately 3 hours.
\vspace{0.5cm}

\smartparagraph{Optimization: Binary Search.} Instead of incrementally constructing all the sets, we can apply Lemma~\ref{lema:monotonic} together with Theorem~\ref{tm:knownNd} to devise an algorithm that computes
the value of $N(d)$ in the binary search manner.
The algorithm makes a guess $n$ on the value $N(d)$ and tries to construct a set $X$ such that $|X| = n$ and $\rm{NICG}(X)$. If no such set exists, then $N(d) < n$, otherwise $n \le N(d)$.

As an illustration, consider $d = 5$. Applying Theorem~\ref{tm:knownNd} to compute the bounds on the value of $N(5)$, 
the algorithm derives the interval in which $N(5)$ occurs: $6 \le N(5) \le 11$. The first guess is $N(5) = 8$.
Then the algorithm tries to construct a set $X$ such that $|X| = 8$ and $\rm{NICG}(X)$. Because such a set does not
exist, the algorithm will not construct it implying $6 \le N(5) \le 7$. The next guess is $N(5) = 7$.
Since there exists a set $X$ such that $|X| = 7$ and $\rm{NICG}(X)$, the algorithm will construct such a set and output 
$N(5) = 7$.

\smartparagraph{Incremental Construction vs Binary Search.} We have implemented both,
the incremental construction and the binary search approach, to derive $N(5)$.
The binary search approach found $N(5)$ faster than the incremental construction approach. However, our experimental results
show that the binary search approach is slower than the incremental construction approach in computing $N(d)$ for $d > 5$.
The difference in the experimental results is caused by the fact that testing the existence of a set $X$ such that $\rm{NICG}(X)$ is computationally more expensive than testing the existence of a set $Y$ such that $\rm{NICG}(Y)$ when
$|X| > |Y|$. Another issue with the binary search approach is that if the initial interval
is not tight enough, the algorithm might make a guess on $N(d)$ that is significantly larger than the value $N(d)$ itself.

As an example consider $d = 5$. In the incremental approach the algorithm must examine at most $\binom{31}{6} + \binom{31}{7} + \binom{31}{8} = 11254581$ sets of vectors. In the binary search approach the algorithm must examine $\binom{31}{7} +  \binom{31}{8} = 2921750$ sets of vectors, where the value 31 represents cardinality
of the set $\{0, 1\}^5 \setminus \{0, 0, 0, 0, 0\}$.

\smartparagraph{Optimization: Preserving Sums.} In order to obtain a more efficient computation of $N(d)$ we tried
		an approach based on preserving sums of vectors, which can be later reused in the computation.
		The idea on preserving sums was motivated by the following observation:
		if $Y \subset X$ and $\rm{NICG}(X)$, then $\Sigma X \not\in \rm{int\_cone}(Y)$. 
		To benefit from the observation, for every examined $Y$ for which $\rm{NICG}(Y)$ holds the
		algorithm must keep track of the sum $\Sigma Y$.
		We have tried this heuristic, but did not obtain any significant improvement.
		The advantage of such an approach is that 
		the algorithm can compute new sums quickly, and detect not $\rm{NCIG}$ faster.
		The disadvantage
		is the process of maintaining sums. The search algorithm must be aware which sums should be stored and which removed.
		In certain cases the search algorithm must
		copy the whole data structure that keeps the sums. Our experiments have shown that maintaining so much
		information is more costly than the calculation.
		
	\subsection{Isomorphic sets}\label{subsec:isomorphic}
		So far the algorithms searched for the solution over \textbf{all} sets of vectors of given cardinality. We applied
		optimizations to early detect if set does not improve the solution. Also, we introduced
		approaches which improved maintaining information about the sets. But common to all those cases was that almost all the sets were examined. This was a big drawback: we will demonstrate that one does not need to examine all the sets.
To motivate our observations, consider the following two sets: $X_1 = \{(1, 1, 0), (0, 1, 0)\}$ and $X_2 = \{(0, 1, 1), (0, 1, 0)\}$.
		Performing a permutation $\begin{pmatrix} 1 & 2 & 3 \\ 3 & 2 & 1 \end{pmatrix}$ on indices of components of the 
		vectors in $X_1$ we obtain $X_2$. Permuting components of the vectors does not affect the solution.
		Therefore, if set $X_1$ does not lead to the solution, then $X_2$ does not lead, too. Similarly, if
		$X_1$ leads to the solution, $X_2$ leads as well. The observation allow us to consider only vectors
		that are not isomorphic, where isomorphism between two sets of vectors is defined as follows:
		\begin{definition}
			We say that two sets of vectors $X, Y \subseteq \{0, 1\}^d$ are isomorphic if there exists 
			a permutation $P$ over the set $\{1, \ldots, d\}$
			and a bijective function $f_P : X \rightarrow Y$ defined as
			\[
				f_P(x) = y \Leftrightarrow x_i = y_{P(i)}, i = 1, \ldots, d.
			\]
		\end{definition}
		Basically, there are two ways to check, call it check functions, whether we already considered an isomorphic set:
		\begin{enumerate}
			\item For each considered set $X$ so far, mark all isomorphic sets to $X$, i.e. mark all $d!$ sets (note that some
						of them might repeat) storing them in a structure $marked$. Before a new set is processed
						check whether it is in $marked$.
			\item Store each considered set in a structure $done$. When there is a new set $X$ to be examined, run all
						$d!$ permutations on $X$. For each permutation $p$ check if $p(X)$ is in $done$.
		\end{enumerate}
		We used this approach for $d \leq 7$. There are $2^{64}$ different sets in case $d = 6$. A particular
		set can be isomorphic to at most $6!$ other sets. Because this is an equivalence relation, at least $\frac{2^{64}}{6!}$
		non-isomorphic sets should be stored somehow. This is far away too much. To avoid this problem, we 
		can use a bit different method. Let us define 
		\[
			X^{(k)} = \{x | x \in X \rm{\ and\ } x \rm{\ contains\ exactly\ } k \rm{\ non-zero\ components}\}.
		\]
		Then we say $X$ and $Y$ are isomorphic if $(X^{(1)}, X^{(2)})$ is isomorphic to $(Y^{(1)}, Y^{(2)})$. 
		With such a method, sets $X = \{(1, 0, 0, 0)$ $, (1, 1, 1, 0)\}$
		and $Y = \{(1, 0, 0, 0), $ $(1, 1, 0, 1)\}$ will not be considered as isomorphic, although they are isomorphic
		by a permutation $\begin{pmatrix} 1 & 2 & 3 & 4 \\ 1 & 2 & 4 & 3 \end{pmatrix}$. Thus, it does not cover
		all isomorphic pairs, but allow us to shrink the usage of memory.
		
		Finally, the required memory is sufficiently small that we can
		store sets in the both cases in an array. As a result, the check 
		function can be executed in a constant time.
		The first approach uses $O(d! \cdot T)$ time ($T$ is number of non-isomorphic sets) for marking, 
		and $O(d! \cdot M)$ memory. Once we do that, 
		the check function is performed in a constant time.\\
		The second approach uses $O(1)$ time to store an examined set, and it uses $O(M)$ memory.
		For each stored set $X$ there are multiple isomorphic sets. Note that some of those sets do not 
		always have $d!$ isomorphic
		sets, like for example $\{(1, 0, 0, 0)\}$. Actually, most of the time they have less than $d!$.
		Before storing a set $X$ we have to generate $d!$ other sets and check whether they are in $done$ or not.
		There are $T'$ isomorphic sets, where for $d = 6$ the value $T'$ is a few hundred times bigger than $T$.
		This approach gives the time complexity $O((T + T') \cdot d!)$, and the memory complexity $O(M)$.\\
		In our case, we have already shrank memory, thus the memory is not an issue, but time efficiency.
		Therefore we choose the first approach.
		
		Using this optimization we obtained a method that in a few minutes calculates $N(5) = 7$.

\section{Gaussian Elimination: N(6)=9}

	In Section~\ref{sec:core} we described a different approaches towards a construction of  $\rm{NICG}$ sets. Most of the approaches consider $\rm{NICG}$ property only of
	currently calculated set. We also argued why the approach in which we try to maintain as many information as possible is not very efficient. Knowing that, we decided
	to merge ideas from the both approaches and come with a more efficient algorithm.\\
	The idea about maintaining too much information is not good, as we have explained. However, maintaining some amount
	of ``not $\rm{NICG}(Y)$'' information might reduce the search space, and thus improve the running time. 
	The property ``not $\rm{NICG}(Y)$'' allow us to avoid computing over every set $X$ such that $Y \subset X$.
	For instance, consider an example where $Y = \{(1, 0, 0, 0, 0), (0, 1, 0, 0, 0), $ $(1, 1, 0, 0, 0)\}$.
	$Y$ is not $\rm{NICG}$ because $\sum{Y} = 2 (1, 1, 0, 0, 0)$. There are $\binom{28}{4} + \binom{28}{5} +
	\binom{28}{6} + \binom{28}{7} + \binom{28}{8} = 4787640$ sets $X$, such that $|X| \leq 8, Y \subset X$. Most of
	those sets would be considered if we do not shrink them using the fact not $\rm{NICG}(Y)$. Quite a lot
	of sets is affected by only one set, thus we decided to use information about not $\rm{NICG}$ sets.\\
	On the other side, we decided not to use information about $\rm{NICG}$ sets, but for a given set $X$ to test
	$\rm{NICG}(X)$ on fly. In Section \ref{sec:introduction} we saw that answering
	$\rm{NICG}(X)$ is the same as answering is there a solution to the corresponding system of equations. Instead of 
	using the procedure \proc{inIntConeTest} to answer that, we try to solve system using Gaussian elimination.
	
	Answering whether a given set $X$ has the $\rm{NICG}(X)$ property by solving the corresponding
	system with Gaussian elimination might look like an inefficient approach. To understand such a view, consider
	a system of five equations and eight variables (what could be the case for $d = 5$) such that its
	solution contains three parameter-variables. 
	Each component of a vector sum of the eight binary vectors is a non-negative integer value not greater than 8.
	Therefore there are $9^3$ possibilities to assign values to the three parameters.\\
	A system that represents a set for $d \ge 6$ might contain even more parameter-variables
	resulting in even more possible assignments to the parameter-variables.
	However, it turned out that, for the systems our search algorithms constructed, Gaussian method works very good since
	most of the parameter-variable assignments are not valid.
	
	This approach verified result for $d \leq 5$ and gave $N(6) = 9$ in around a thirty minutes.\\
	Below are given examples of sets of vectors that represent solutions for $d = 1 \ldots 6$. Those sets
	are obtained by applying the described approaches.
	
	\begin{table}
		\begin{center}
			\begin{tabular}{ c || c | c | c | c | c | c}
		  $d$ & 1 & 2 & 3 & 4 & 5 & 6 \\ \hline
		  $N(d)$ & 1 & 2 & 3 & 5 & 7 & 9 \\ \hline
		  a solution & $\begin{pmatrix} 1 \end{pmatrix}$
		  & $\begin{pmatrix} 1 & 1 \\ 0 & 1 \end{pmatrix}$
		  & $\begin{pmatrix} 1 & 1 & 1 \\ 1 & 0 & 1 \\ 0 & 1 & 1 \end{pmatrix}$
		  & $\begin{pmatrix} 1 & 1 & 1 & 1 & 0 \\ 1 & 1 & 1 & 0 & 1 \\ 0 & 1 & 0 & 1 & 1 \\ 0 & 0 & 1 & 1 & 1\end{pmatrix}$
		  & $\begin{pmatrix} 1 & 1 & 1 & 1 & 0 & 0 & 1 \\ 1 & 1 & 0 & 0 & 1 & 1 & 1 \\
		  										 0 & 1 & 1 & 1 & 1 & 0 & 0 \\ 0 & 0 & 1 & 0 & 0 & 1 & 1 \\
		  										 0 & 0 & 0 & 1 & 1 & 1 & 1 
		  		 \end{pmatrix}$
		  & $\begin{pmatrix} 1 & 1 & 1 & 1 & 0 & 0 & 1 & 1 & 1 \\ 1 & 1 & 1 & 0 & 1 & 1 & 0 & 1 & 1 \\
		  										 0 & 1 & 0 & 1 & 1 & 0 & 0 & 0 & 1 \\ 0 & 0 & 1 & 1 & 0 & 1 & 0 & 1 & 0 \\ 
		  										 0 & 0 & 0 & 0 & 1 & 0 & 1 & 1 & 0 \\ 0 & 0 & 0 & 0 & 0 & 1 & 1 & 0 & 1 
		  		 \end{pmatrix}$
			\end{tabular}
			\caption{Solutions for different $d$ values, an example solution per a value. Full set of solutions is available at \cite{bapasite}.}
		\end{center}
	\end{table}
	
\section{Speeding up Search using Weak Isomorphisms}

	In Subsection \ref{subsec:isomorphic} we have seen two approaches that might be used to eliminate isomorphic states.
	One of the approaches is time inefficient, and another one is memory costly. Both, time and memory inefficiency,
	grows exponentially, which suggests that any of those approaches can be used only for small $d$ values.	On the
	other side, both of these methods are very strict in sense that for a given type of isomorphism, the methods
	eliminate all isomorphic states (i.e. detect all isomorphic $(X^{(1)}, X^{(2)})$ states, as has been already explained).
	
	Every approach to the problem we have used so far can be described by the following algorithm:
		\begin{codebox}
		\zi \Comment $currSolution$ represents a NICG set of vectors that is
		\zi		already considered as part of some solution.
		\zi \Comment $nonUsedVectors$ represents set of vectors that 
		\zi		can be used to build a solution from the current $state$.
		\Procname{$\proc{Solve}(currSolution, nonUsedVectors)$}
		\zi	update $solutions$ by $currSolution$ \Comment $solution$ is a global set of solutions.
		\zi	$vector \gets$ choose element from $nonUsedVectors$
		\zi	$nonUsedVectors \gets$ $nonUsedVectors \backslash vector$
		\zi	update isomorphic states
		\zi	\If $\proc{NICG}(currSolution \cup vector)$
		\zi	\Then
					$\proc{Solve}(currSolution \cup vector, nonUsedVectors)$
				\End
		\zi	$\proc{Solve}(currSolution, nonUsedVectors)$
		\end{codebox}
		\label{prog:Solve}
	
	The algorithm above generates a search tree. The method we use to eliminate isomorphic states in the search tree
	directly affects both, the running time and the memory usage.\\
	As we can see, on one side are the introduced methods that eliminate a lot of states, but they
	use too much time, or too much memory. On the other side, if we do not use any method for elimination 
	we have to search over a huge tree, but then we do not use any extra time or memory for elimination. Instead of devising
	a method that rely on benefits only of one or the another side, we have tried to ``meet in the middle''.\\
	A method that we describe is not so strict in elimination, as the previous methods were, but it is very
	efficient as we are going to show by the results.
	
	Suppose the algorithm is in a state $(currSolution = X, nonUsedVectors = A)$, and it chooses to examine
	$vector \gets x$. In this
	state the algorithm must decide which states $(X \cup y, A \backslash x)$ it is \textbf{not} going to visit, knowing
	that it is going to visit $(X \cup x, A \backslash x)$. Note that even if $\proc{NICG}(X \cup x)$ returns
	$\const{false}$, the state $(X \cup x, A \backslash x)$ can be considered as a visited one, but such that it
	does not lead to the solution. Obviously, if $X \cup x$ is isomorphic to $X \cup y$, there is no
	point to search over subtree represented by $X \cup y$.
	By isomorphism between sets $X$ and $Y$ we denoted existence of a permutation
	that maps $X$ to $Y$. Additionally, we observe that
	if there exists a permutation $P$ such that $P(X) = X$ and $P(x) = y$,
	then there exists a permutation $P'$ such that $P'(X \cup x) = X \cup y$. The opposite does not
	stand always.\\
	Consider even more specific type of permutations that make two states being isomorphic. We say
	that a permutation $P$ 'preserves order of ones' of a collection of vectors $X$, if the following holds:
	\[
		(\forall x \in X) (\forall i \in \{1, \ldots, d\}) x_i = 1 \Rightarrow P(i) = i,
	\]
	where by $x_i$ is denoted vector-component of $x$ at the position $i$.
	In other words, when $P$ is applied on $X$ it does not
	change order of ones in that collection of binary vectors. Of course, it immediately leads to the conclusion
	$P(X) = X$. This type of permutations we call ``1-order preserving'' permutations.
	
	In our algorithm we describe 1-order preserving permutations by using
	a single boolean array $fixedPerms$ of the size $d$. If $fixedPerms[i] = \const{true}$ it means that
	the array represents a collection of permutation such that for every permutation $P$ from the collection 
	it stands $P(i) = i$. The array $fixedPerms$ is updated in the following way:
	\begin{itemize}
		\item Initially, $fixedPerms = \{\const{false}\}^d$.
		\item When a new vector $x$ is added to the current state $X$, array $fixedPerms$ is updated
					as follows:
					\begin{codebox}
					\zi	\For $i \gets 1 \ldots d$
					\zi	\Do
								\If the $i$-th component of $x \isequal 1$
					\zi		\Then
									$fixedPerms'[i] \gets \const{true}$
								\End
							\End
					\end{codebox}
	\end{itemize}
	Therefore, for each newly added vector the algorithm updates $fixedPerms$ in $O(d)$ time.
	For each state the algorithm needs additional
	$d$ bits to represent the corresponding array.
	Testing whether two vectors $x$ and $y$ are isomorphic
	with respect to 1-order preserving collection given by $fixedPerms$ can be done as follows:
		\begin{codebox}
		\Procname{$\proc{IsomorphicVectors}(x, y, fixedPerms)$}
		\zi	\If number of ones in $x \neq $ number of ones in $y$
		\zi	\Then
					\Return \const{false}
				\End
		\zi	\For $i \gets 1 \ldots d$
		\zi	\Do
					\If $fixedPerms[i] \isequal \const{true}$
		\zi		\Then
						\If the $i$-th components of $x$ and $y$ differ
		\zi			\Then
							\Return \const{false}
						\End
					\End
				\End
		\zi	\Return \const{true}
		\end{codebox}
	The method $\proc{IsomorphicVectors}$ for particular input executes in $O(d)$ time.
	
	As we have seen, eliminating states according to 1-order preserving collection is both time and memory
	efficient. However, such approach is a bit weak, and it does not eliminate all isomorphic states. This can
	be illustrated with the following example. Suppose $currSolution = \{(1, 1)\}$ and
	$nonUsedVectors = \{(1, 0), (0, 1)\}$. The only permutation that does not change order of ones in 
	$currSolution$ is $\begin{pmatrix} 1 & 2 \\ 1 & 2 \end{pmatrix}$, therefore $fixedPerms = \{\const{true}, \const{true}\}$.
	Thus, call of $\proc{IsomorphicVectors}((1, 0), (0, 1), \{\const{true}, \const{true}\})$ will
	return $\const{false}$. However, there exists a permutation $\begin{pmatrix} 1 & 2 \\ 2 & 1 \end{pmatrix}$
	that maps $\{(1, 1), (1, 0)\}$ onto $\{(1, 1), (0, 1)\}$, what means that these two states should
	be considered as isomorphic.
	
	Although the last approach is based on weak isomorphism it runs faster then the previously described approaches
	for $d = 1 \ldots 6$.
	For $d = 6$ there exist 254 non-isomorphic solutions.
	The algorithm that uses Gaussian elimination finds 80000 solutions, with many of them being isomorphic. 
	The search lasts for about 45 minutes.
	The algorithm that implements the weak isomorphism finds nearly 5000 solutions in about 5 minutes.

\section{Randomizing Search Order; New Lower Bounds}

	The algorithms we have presented so far deterministically choose the next state from the current state.
	We developed a randomized algorithm that chooses a next state randomly in such a way that all states
	reachable from the current state have the same probability to be chosen.
	The uniform probability distribution on choice over the states is achieved by shuffling list of
	states reachable from the current state, and then picking the first state from the shuffled list as the next state. 
	Although the implementation difference is minor,
	results are significantly better than using deterministic algorithm, as can be seen in the following paragraph.
	
	We did not succeed to get exact values for $N(7)$, $N(8)$, $N(9)$ or $N(10)$. Instead, we got better lower estimate
	of these values. Those estimates are: $N(7) \geq 11$; $N(8) \geq 13$; $N(9) \geq 14$; $N(10) \geq 16$.
	It is interesting that in less than a second we got 
	result $N(6) \geq 9$. In a few minutes we got $N(7) \geq 11$, and in an hour we got $N(8) \geq 13$, $N(9) \geq 14$ and
	$N(10) \geq 16$. On the other hand, deterministic algorithm uses the following
	amount of time: $N(6) \geq 9$ in a minute; $N(7) \geq 11$ after a few hours; 
	$N(8) \geq 13$ we did not get even after a day of running the algorithm.
	
\section{Better estimate of $N'(7)$ using Decomposition: from 36 to 19}

	The best known estimate so far for $N'(7)$ is 36. We successfully improved this upper bound to 19 as follows:\\
	Let $X$ be a solution for $d = 7$, i.e. $X \subseteq \{0, 1\}^7$ and $|X| = N(7)$. 
	Then set $X$ can be decomposed into two subsets $X_0$ and $X_1$ such that
	\begin{itemize}
		\item $X_0 \cap X_1 = \emptyset$,
		\item $X_0 \cup X_1 = X$,
		\item $X_0$ contains only vectors which first component is 0,
		\item $X_1$ contains only vectors which first component is 1.
	\end{itemize}
	From Lemma \ref{lema:monotonic} it follows that $X_0$ and $X_1$ are NICG sets. Since the first
	component of each vector in $X_0$ is 0, $X_0$ can be considered as
	a set of 6-dimensional vectors. Therefore, $|X_0| \le N(6)$.
	In order to estimate an upper bound on $|X_1|$, we use the same algorithm as we use to obtain $N(6)$,
	with the input defined as set $\{x\ |\ x \in \{0, 1\}^7 \wedge x_1 = 1\}$.
	Running the algorithm on such a set we get the final result, a set $Y$, after 30 minutes with
	$|Y| = 10$. Therefore, $|X_1| \leq 10$. Since every solution for $d = 7$ can be decomposed into $X_0$
	and $X_1$, with upper bounds 9 and 10, respectively, it implies that solution for $d = 7$ is of cardinality of at
	most $9 + 10 = 19$.

\section{Improvement of $N'(d)$ for Arbitrary $d$}\label{section:row-isomorphism}

\subsection{Isomorphism on row additions}\label{subsection:row-isomorphism}
	Each NICG set of vectors $X$ can be described as a matrix of dimension $d \times |X|$, where each column
	of the matrix represents a single vector from $X$, and no two different columns represent the same vector.
	In this section we introduce an isomorphism of NICG solutions that involves additions and
	substructions on rows of matrices.
	
	Consider an NICG set of vectors $X$, and the corresponding matrix $M$. We will say that two rows, $i_1$ and $i_2$,
	do not share variable if there does not exists $j$ such that $M_{i_1, j} = 1$ and $M_{i_2, j} = 1$.
	We can state the following lemma.
	\begin{lema}\label{lema:row-addition}
		Consider a matrix $M$, and assume that $M$ contains at least two rows that do not share variable. 
		Let two of these rows be $i_1$ and $i_2$. Let a matrix $M'$ be obtained
		from the matrix $M$ by replacing the row $i_2$ by the row-sum $i_1 + i_2$. $M$ represents NICG set of vectors
		if and only if $M'$ represents NICG set of vectors.
	\end{lema}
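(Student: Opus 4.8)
The plan is to realize the row operation as an invertible integer linear map applied simultaneously to all the column vectors, and then to observe that integer-cone membership is preserved verbatim under any injective linear map. Concretely, let $x_1, \ldots, x_n$ be the columns of $M$ (so $X = \{x_1, \ldots, x_n\}$), and let $E$ be the $d \times d$ integer matrix that is the identity except for a single extra $1$ in position $(i_2, i_1)$; left multiplication by $E$ adds row $i_1$ to row $i_2$, so $M' = EM$ and the columns of $M'$ are exactly $Ex_1, \ldots, Ex_n$. The matrix $E$ is invertible over $\mathbb{Z}$ (its inverse subtracts row $i_1$ from row $i_2$), hence $E$ is a bijection of $\mathbb{Z}^d$.

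First I would check that $M'$ is a legitimate object of the same kind, namely a $\{0,1\}$-matrix whose columns are distinct. This is exactly where the hypothesis is used: in each column $j$ the new entry in row $i_2$ is $M_{i_1,j} + M_{i_2,j}$, and since rows $i_1$ and $i_2$ do not share a variable at most one of these summands is $1$, so the entry lies in $\{0,1\}$, while all other rows are untouched. Injectivity of $E$ guarantees that the columns $Ex_k$ remain pairwise distinct, so $X' := \{Ex_1, \ldots, Ex_n\}$ is a set with $|X'| = |X| = n$ and $M'$ indeed encodes $X'$.

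The heart of the argument is the following transfer principle: for any injective linear $E$, any finite $S \subseteq \mathbb{Z}^d$, and any $c \in \mathbb{Z}^d$, one has $c \in \rm{int\_cone}(S)$ iff $Ec \in \rm{int\_cone}(ES)$. This is immediate: a witnessing combination $c = \sum_{s \in S} \lambda_s s$ with $\lambda_s \ge 0$ pushes forward to $Ec = \sum_{s \in S} \lambda_s (Es)$ with the same coefficients, and conversely any combination $Ec = \sum \lambda_s (Es)$ gives $E(c - \sum \lambda_s s) = 0$, whence $c = \sum \lambda_s s$ by injectivity. I would then apply this with $S = X$ and with each $S = X \setminus \{x_k\}$, using that $\Sigma X' = \sum_k E x_k = E \Sigma X$ and that $E(X \setminus \{x_k\}) = X' \setminus \{Ex_k\}$ (again by injectivity). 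The two defining clauses of $\rm{NICG}$, that $\Sigma X \in \rm{int\_cone}(X)$ and that $\Sigma X \notin \rm{int\_cone}(X \setminus \{x_k\})$ for every $k$, then hold for $X$ exactly when the corresponding statements hold for $X'$, giving $\rm{NICG}(X) \iff \rm{NICG}(X')$, which is the claim.

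I do not expect a genuine obstacle here; the only points requiring care are bookkeeping rather than depth. The subtle step is verifying that the transformation keeps us inside $\{0,1\}^d$ and still yields a set of distinct vectors, since otherwise $M'$ would not be a valid NICG matrix and $N(d)$ would not even apply — and this is precisely the role of the non-sharing hypothesis together with the invertibility of $E$.
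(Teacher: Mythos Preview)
Your proof is correct. The paper's argument takes a more equation-centric view: it treats $\rm{NICG}(X)$ via the property that the system $M\lambda=\Sigma X$ admits only the all-ones vector as a non-negative integer solution, and shows directly that any non-negative integer solution of the $M'$ system yields one for the $M$ system by subtracting the $i_1$-equation from the new $i_2$-equation (and conversely by adding). You instead recast the row operation as left multiplication of the column vectors by an invertible integer matrix $E$ and invoke a clean transfer principle, $c\in\rm{int\_cone}(S)\Leftrightarrow Ec\in\rm{int\_cone}(ES)$ for injective linear $E$, applied to $S=X$ and each $S=X\setminus\{x_k\}$. Both arguments rest on the same fact---invertibility of the elementary row operation---but your packaging isolates a reusable lemma, makes the two directions of the equivalence automatic by symmetry, and explicitly handles the bookkeeping (entries remain in $\{0,1\}$, columns remain distinct) that the paper's sketch leaves implicit.
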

	\begin{corollary}\label{cor:subset-row}
		For NICG set $X$, and its corresponding matrix $M$, such that $M$ contains two rows $i_1$ and $i_2$ such that
		every variable presented in $i_1$ is presented in $i_2$ as well, there exists an NICG set $X'$ which can be obtained
		from $X$ by replacing the row $i_2$ by the row-subtruction $i_2 - i_1$.
	\end{corollary}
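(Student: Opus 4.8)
The plan is to realize the row-subtraction of the corollary as the exact inverse of the row-addition operation of Lemma~\ref{lema:row-addition}, so that the corollary follows from a single backward application of that lemma.

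First I would check that the replacement actually produces a legitimate $0/1$ matrix. Write $i_2' = i_2 - i_1$ for the new row. The hypothesis says that the support of $i_1$ (the set of columns $j$ with $M_{i_1,j}=1$) is contained in the support of $i_2$. Hence at every column $j$ where $M_{i_1,j}=1$ we also have $M_{i_2,j}=1$, so the subtraction yields $0$ there, and where $M_{i_1,j}=0$ the entry is unchanged; thus every entry of $i_2'$ lies in $\{0,1\}$ and $M'$ is again a $0/1$ matrix. I would also note that the columns of $M'$ remain pairwise distinct: two columns that agreed on every row other than $i_2$ must agree on row $i_1$, so subtracting $i_1$ from $i_2$ preserves their disagreement on row $i_2$. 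Consequently $M'$ genuinely encodes a set $X'$ of distinct vectors in $\{0,1\}^d$.

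The key observation is that, after the replacement, rows $i_1$ and $i_2'$ no longer share a variable: the support of $i_2' = i_2 - i_1$ is exactly $\mathrm{supp}(i_2) \setminus \mathrm{supp}(i_1)$, which is disjoint from $\mathrm{supp}(i_1)$. This is precisely the disjointness hypothesis needed to invoke Lemma~\ref{lema:row-addition} on $M'$, with the roles of its two non-sharing rows played by $i_1$ and $i_2'$. Applying that lemma, replacing $i_2'$ by the row-sum $i_1 + i_2'$, produces a matrix that is NICG if and only if $M'$ is NICG. But $i_1 + i_2' = i_1 + (i_2 - i_1) = i_2$, so this row-sum restores the original row $i_2$ and the resulting matrix is exactly $M$. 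Therefore $M$ represents an NICG set if and only if $M'$ does, and since $M$ is assumed NICG, $X'$ is NICG as claimed.

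The argument is essentially immediate once the lemma is in hand; the only points requiring care are the two bookkeeping checks above — that the subtraction stays within $\{0,1\}$ (guaranteed by the subset hypothesis) and that it leaves the columns distinct — together with the observation that subtraction turns the ``subset'' configuration into the ``disjoint-support'' configuration that the lemma demands. The main conceptual obstacle is recognizing that the corollary is not a new fact but the backward reading of Lemma~\ref{lema:row-addition}, with the subset hypothesis serving exactly to make the inverse operation well-defined.
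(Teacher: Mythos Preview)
Your proposal is correct and matches the paper's intended argument: the corollary is stated without proof precisely because it is the backward reading of Lemma~\ref{lema:row-addition}, and you recover it exactly that way by observing that $i_1$ and $i_2-i_1$ have disjoint supports and that adding $i_1$ back restores $M$. Your additional bookkeeping checks (that $i_2-i_1$ stays in $\{0,1\}$ and that the columns of $M'$ remain pairwise distinct) are details the paper leaves implicit, so if anything your write-up is more careful than the original.
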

	Lemma \ref{lema:row-addition} gives a new insight about isomorphic NICG sets. Using the lemma,
	we give a better estimate of $N'(d)$, as presented
	in Section \ref{section:upper-bound-improvement}.

\subsection{Upper Bound Improvement for large $d$}\label{section:upper-bound-improvement}
	In this section we give a better estimate of $N'(d)$. The improvement rely on result presented in
	Section \ref{subsection:row-isomorphism}. State the following lemma:
	\begin{lema}\label{lema:upper-bound}
		For each NICG set $X$, and its corresponding matrix $M$, there exists an NICG set $X'$, 
		along with its corresponding matrix $M'$, such that 
		\begin{enumerate}[(1)]
			\item $|X| = |X'|$, and 
			\item every row in $M'$ contains at least one value 0.
		\end{enumerate}
	\end{lema}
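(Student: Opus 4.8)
The plan is to reduce an arbitrary NICG matrix to one without all-ones rows by repeatedly applying the row-subtraction of Corollary~\ref{cor:subset-row}, exploiting the observation that a $0/1$ row fails to contain a $0$ precisely when it is the all-ones row; I will call such a row \emph{full}. Since Corollary~\ref{cor:subset-row} (and Lemma~\ref{lema:row-addition}) only ever replace one row by an integer combination of rows and never change the number of columns, every matrix produced along the way has the same number of columns as $M$, so condition~(1), $|X'| = |X|$, will be automatic. The only real work is therefore to drive the number of full rows down to zero while staying inside the class of $0/1$ matrices that represent NICG sets.

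The core step is as follows: given a full row $i_2$, I would pick any other row $i_1 \neq i_2$ that is not identically $0$ and replace $i_2$ by $i_2 - i_1$. Because $i_2$ is full, the support of $i_1$ is trivially contained in the support of $i_2$, so Corollary~\ref{cor:subset-row} applies and again yields an NICG set; moreover $i_2 - i_1 = \mathbf 1 - i_1$ is a genuine $0/1$ row, since each entry equals $1 - (i_1)_j \in \{0,1\}$, so the result is still a $0/1$ matrix. As $i_1 \neq \mathbf 0$, the new row $i_2$ now carries at least one $0$ and is no longer full; and because the operation touches only row $i_2$, every other row — in particular the set of remaining full rows — is left untouched. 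Hence each such step decreases the number of full rows by exactly one and creates no new full rows.

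The step that most needs care, and the main obstacle, is justifying that a suitable non-zero row $i_1 \neq i_2$ always exists whenever a full row is present. Suppose, toward a contradiction, that every row other than $i_2$ were identically $0$. Then every column of $M$ (i.e. every vector of $X$) would have support contained in $\{i_2\}$, so each vector is either $e_{i_2}$ or $\mathbf 0$. The zero vector cannot belong to an NICG set, since $\Sigma X = \Sigma(X \setminus \{\mathbf 0\})$ would put $\Sigma X \in \rm{int\_cone}(X \setminus \{\mathbf 0\})$ and violate non-redundancy; and the columns of $M$ are distinct by definition, so at most one column equals $e_{i_2}$. This forces $|X| \le 1$, which we may exclude as degenerate — for $|X|=1$ the claim is unattainable but irrelevant, the interesting regime being $|X| = N(d) \ge d \ge 2$. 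Thus for $|X| \ge 2$ a non-zero row $i_1 \neq i_2$ always exists.

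Putting these pieces together, I would argue by induction on the number of full rows, a potential bounded above by $d$: if $M$ has a full row, one application of the core step produces an NICG $0/1$ matrix with the same number of columns and strictly fewer full rows, so iterating at most $d$ times yields the desired $M'$, every row of which contains a $0$. The delicate point is exactly the existence argument for $i_1$, since that is where the NICG hypothesis (no zero vector, distinct columns) is genuinely used; the $0/1$-preservation and the ``no new full rows'' bookkeeping are then routine, because each operation alters only a single row.
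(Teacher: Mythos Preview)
Your proof is correct and follows essentially the same route as the paper: iterate Corollary~\ref{cor:subset-row}, subtracting a nonzero row from each all-ones row until none remain. Your termination argument (the count of full rows strictly decreases) and your handling of the degenerate $|X|=1$ case are in fact more careful than the paper's version, which invokes only ``finite number of rows'' and does not isolate the edge case.
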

	\begin{proof}
		If $M$ satisfies the condition (2), then let $X' = X$ and the proof is done. Therefore, assume
		that $M$ contains a row $i$ such that all its values are 1. It means that variables of every row $j$ 
		are contained in row $i$ as well. By Corollary \ref{cor:subset-row}, there exists $X_1$ that is obtained from $X$
		by replacing the row $i$ by the row $i - j$. If $X_1$ contains a row with 1s only, then we are going to apply
		Corollary \ref{cor:subset-row} on $X_1$ getting $X_2$. We continue this process until we get $X_r$ that
		does not contain row with all 1s. Because there is finite number of rows, $X_r$ will be obtained in a finite
		number of steps. Once we obtain $X_r$, let $X' = X_r$. By the construction and the corollary, $X_r$
		satisfies both (1) and (2), implying $X_r$ is NCIG.
		
		This completes the proof.
	\end{proof}
	Acording to Lemma \ref{lema:upper-bound}, there exists a solution $X$ to the problem such that the corresponding matrix 
	does not contain a row with all values being 1. 
	Thus we have that every component of the sum of vectors of $X$ never exceed $N - 1$.
	If we recall proof of Theorem 2 in
	\cite{KuncakRinard07TowardsEfficientSatisfiabilityCheckingBoolean}, we obtain an upper bound on $N$ to be 
	the maximal value such that 
	\[
		2^N \leq N^d.
	\]
	The last inequality gives a slight improvement on an upper bound of $N(d)$.
	
	The upper bound can be even more improved by using result $N(d) > d + 1$, for $d > 4$, given in
	\cite{KuncakRinard07TowardsEfficientSatisfiabilityCheckingBoolean}.
	Consider a solution $X$ and its corresponding matrix $M$ for $d > 4$. By Lemma \ref{lema:upper-bound}
	there exists a solution such that each row contains at least one 0 value. Assume each of the rows contains 
	a single 0 value.
	Therefore, at least $N(d) - d$ columns would contain 1s only. Since $N(d) - d \ge 2$ for $d > 4$,
	the system is not NICG. Therefore, the assumption is wrong
	and there exist at least one row that contains 2 zeros. The last gives a new improvement on 
	upper bound on $N$ that can be described by the following inequality:
	\[
		2^N \leq N^{d - 1} \cdot (N - 1).
	\]


\section{Different Approaches}

	In order to compute $N(d)$ we used general solvers for solving systems of equations with non-negative 
	integer variables. The solvers were used
	to give an answer whether a particular set of vectors $X$ has the $\rm{NICG}$ property.
	If $M$ is the matrix that represents $X$, then existance of the
	NICG property can be answered as follows:
	\begin{itemize}
		\item Let $M^k$ be defined as a matrix which $k$-th column contains only 0s, and every other column
					is a copy of the corresponding column in $M$.
		\item If for every matrix $M^k$, for $1 \le k \le \rm{\# of\ columns \ in \ } M$, 
					there does not exist a non-negative integer solution, 
					then $X$ has the $\rm{NICG}$ property, otherwise it does not have.
	\end{itemize}
	
	We tried this approach using standard solvers GLPK and jOpt. None of them was nearly as efficient on this problem as our 
	implementation of NICG property using Gaussian elimination. We suspect that the reason behind
	this is the fact that we were mainly working with systems of a small number of equations and that our internal implementation avoids calls into external libraries.
        Furthermore, we faced memory management bugs in GLPK.

\section{Conclusions}

{\BAPA} has the small model property. We are interested in deriving a number $N(d)$,
where $N(d)$ is the smallest number such that the following property holds: if formula has a solution, then
it also has a solution of the size $N(d)$.
In this paper we computed the values $N(4)$, $N(5)$ and $N(6)$. We also significantly improved the known bounds for $N(7)$, $N(8)$, $N(9)$ and $N(10)$. Those numbers are used to determine the size of small models for {\BAPA} formulas.

Our motivation was twofold: first, we obtained the bounds that
improve constant factors in asymptotically best 
algorithm for {\BAPA}. Second, we provided another case study 
in developing domain-specific algorithms for combinatorial
search. Although we found a domain-specific search algorithm
to be the most effective, the problem may prove to be fruitful ground
for future general-purpose constraint-solving techniques, such as
pseudo-Boolean and finite-domain solvers.

\bibliographystyle{abbrv}
\bibliography{pnew}
\end{document}